\documentclass[conference]{IEEEtran}
\usepackage{amssymb}
 \usepackage{graphicx}
 \usepackage{amsmath}
 \usepackage{algorithm}
\usepackage{algpseudocode}
\usepackage{amsthm}

\newtheorem*{problem}{Problem}
\newtheorem{theorem}{Theorem}

\ifCLASSINFOpdf
\else
\fi

\begin{document}
%
% paper title
% Titles are generally capitalized except for words such as a, an, and, as,
% at, but, by, for, in, nor, of, on, or, the, to and up, which are usually
% not capitalized unless they are the first or last word of the title.
% Linebreaks \\ can be used within to get better formatting as desired.
% Do not put math or special symbols in the title.
%\title{Age of Entanglement: Continuous Distribution in Satellite Repeater Chains with Intermittent Availability}

\title{Age of Entanglement in Satellite Repeater Chains with Intermittent Availability}

% author names and affiliations
% use a multiple column layout for up to three different
% affiliations
%\author{\IEEEauthorblockN{Michael Shell}
%\IEEEauthorblockA{School of Electrical and\\Computer Engineering\\
%Georgia Institute of Technology\\
%Atlanta, Georgia 30332--0250\\
%Email: http://www.michaelshell.org/contact.html}
%\and
%\IEEEauthorblockN{Homer Simpson}
%\IEEEauthorblockA{Twentieth Century Fox\\
%Springfield, USA\\
%Email: homer@thesimpsons.com}
%\and
%\IEEEauthorblockN{James Kirk\\ and Montgomery Scott}
%\IEEEauthorblockA{Starfleet Academy\\
%San Francisco, California 96678--2391\\
%Telephone: (800) 555--1212\\
%Fax: (888) 555--1212}}

% conference papers do not typically use \thanks and this command
% is locked out in conference mode. If really needed, such as for
% the acknowledgment of grants, issue a \IEEEoverridecommandlockouts
% after \documentclass

% for over three affiliations, or if they all won't fit within the width
% of the page, use this alternative format:
% 
\author{\IEEEauthorblockN{Elif Tugce Ceran
\IEEEauthorblockA{Department of Electrical and Electronics Engineering\\
Faculty of Engineering, Middle East Technical University\\
Ankara, Turkey\\
elifce@metu.edu.tr}
%\IEEEauthorblockA{\IEEEauthorrefmark{2}Twentieth Century Fox, Springfield, USA\\
%Email: homer@thesimpsons.com}
%\IEEEauthorblockA{\IEEEauthorrefmark{3}Starfleet Academy, San Francisco, California 96678-2391\\
%Telephone: (800) 555--1212, Fax: (888) 555--1212}
%\IEEEauthorblockA{\IEEEauthorrefmark{4}Tyrell Inc., 123 Replicant Street, Los Angeles, California 90210--4321}
}}

% use for special paper notices
%\IEEEspecialpapernotice{(Invited Paper)}

% make the title area
\maketitle

% As a general rule, do not put math, special symbols or citations
% in the abstract
\begin{abstract}

Timely availability of high-fidelity entanglement
is essential for emerging quantum networks. This paper
introduces the Age of Entanglement (AoE) as a  novel performance
metric that captures the freshness of bipartite entanglement
under continuous distribution in quantum repeater chains. AoE
extends classical Age of Information (AoI) based metrics to
quantum networking by capturing storage, decoherence,
probabilistic entanglement generation and swapping.

We study a satellite-assisted quantum repeater network in which entangled pairs are generated probabilistically, stored in quantum memories suffer from decoherence, and combined to form end-to-end entangled links. Satellite–ground connectivity is intermittent and modeled as a two-state Markov chain. The resulting AoE minimization problem is formulated as an infinite-horizon Markov decision process (MDP), where control actions determine when to generate, store, or swap entangled pairs under stochastic link availability and memory degradation.

Using relative value iteration, we characterize AoE-optimal policies and evaluate their performance numerically. Our results highlight the impact of decoherence, imperfect operations, and visibility dynamics, and show that the proposed dynamic policies significantly outperform swap-as-soon-as-possible and greedy entanglement generation strategies. Our results provide practical design and control guidelines for satellite-enabled quantum repeater chains supporting continuous entanglement distribution.

\end{abstract}

% no keywords

% For peer review papers, you can put extra information on the cover
% page as needed:
% \ifCLASSOPTIONpeerreview
% \begin{center} \bfseries EDICS Category: 3-BBND \end{center}
% \fi
%
% For peerreview papers, this IEEEtran command inserts a page break and
% creates the second title. It will be ignored for other modes.
\IEEEpeerreviewmaketitle

\section{Introduction}

The development of quantum communication networks is reshaping communication and computation by enabling secure quantum-enabled applications \cite{Quantum_general}. These networks support tasks such as quantum teleportation and quantum key distribution (QKD) \cite{Quantum_general,Micius,advances}. Bipartite entanglement constitutes the essential resource for these tasks, providing a quantum link between spatially separated nodes. However, photon loss in optical fibers limits the direct distribution of entanglement over long distances. To address this challenge, quantum repeaters and satellite-based nodes are employed to extend entanglement distribution to global scales.

%In contrast to classical communication, which transmits information using discrete bits over reliable channels, quantum communication relies on entanglement as its primary enabling resource. Entanglement distribution begins with probabilistic heralded entanglement generation (HEG), in which neighboring nodes attempt to establish a shared quantum state and success is confirmed via classical feedback. Long-distance entanglement is achieved through entanglement swapping, whereby intermediate nodes, such as quantum repeaters or satellites, perform Bell state measurements (BSMs) to combine shorter entangled links into an end-to-end connection without requiring qubits to traverse the full distance. However, quantum states stored in noisy memories are highly time-sensitive and experience decoherence, leading to a progressive degradation in fidelity. Since applications such as QKD impose strict minimum fidelity requirements, entangled links must be discarded once they become unusable.

Classical communication achieves reliability through coding and retransmission of bits over noisy channels, whereas quantum communication is fundamentally constrained by principles such as the no-cloning theorem. As a result, quantum communication often relies on shared entanglement between distant nodes. In repeater-based networks, entanglement is typically generated probabilistically between neighboring nodes and confirmed via classical signaling, then extended over longer distances through entanglement swapping at intermediate nodes. Since stored quantum states are highly sensitive to noise and decoherate over time, their quality degrades as they age.

Satellite-based quantum networks offer a distinct advantage for long-distance entanglement distribution by exploiting low-loss free-space optical channels \cite{Micius,advances,yang2026optimizing}. However, they are constrained by intermittent visibility and limited contact durations due to orbital dynamics and onboard resource constraints. Despite these advantages, entanglement distribution remains challenged by the time sensitivity of quantum states and  the inherently probabilistic nature of generating entanglement between neighboring nodes and combining shorter entangled links to form end-to-end links with swapping.

%the inherently probabilistic nature of operations such as such as heralded entanglement generation (HEG) and Bell state measurements (BSMs).

%To address this challenge, we adopt the Age of Entanglement (AoE) as a performance metric that captures the cumulative decoherence experienced by quantum links during generation and swapping. AoE provides a natural foundation for designing control policies that jointly account for probabilistic operations, memory aging, and intermittent connectivity in satellite-based quantum networks.

To quantify the freshness of quantum resources, we introduce the Age of Entanglement (AoE) as a performance metric, inspired by the classical Age of Information (AoI) \cite{Kaul2012,Yates2021}. In classical networks, AoI resets to the age of the most recently delivered update. Similarly, AoE resets upon successful generation of an end-to-end entangled link; however, it must additionally capture quantum-specific effects such as decoherence and imperfect entanglement generation and swapping, which accumulate while qubits are stored and processed. AoE  provides a natural framework for designing control policies that jointly account for probabilistic operations, memory aging, and intermittent connectivity in satellite-based quantum networks.

\subsection{Related Work}

%Entanglement distribution is typically categorized into on-demand and continuous distribution (CD) protocols \cite{kolar2022adaptive,inesta2023performance}. On-demand protocols establish links only after a specific request is triggered, making them more resource-efficient for networks with a limited number of qubits per node. In contrast, CD protocols continuously supply entangled states to the nodes, allowing background applications like QKD to consume resources without explicit scheduling. While on-demand distribution requires complex scheduling policies to prevent traffic congestion, CD protocols are expected to allocate resources faster in large-scale heterogeneous networks. Recent advancements have introduced adaptive continuous schemes that leverage information from previous requests to guide the random generation of quantum links, significantly reducing service latencies \cite{kolar2022adaptive,inesta2023performance}.

Entanglement distribution protocols are typically classified as on-demand or continuous distribution (CD) schemes \cite{kolar2022adaptive,inesta2023performance}. On-demand approaches \cite{haldar2024fast,yau2025reinforcement,Chehimi2025_delay} establish links only upon request and can incur scheduling overhead as networks scale. In contrast, CD protocols proactively supply entanglement, enabling background applications such as QKD and faster resource allocation in heterogeneous networks. Recent adaptive CD schemes further reduce service latency by leveraging information from prior requests \cite{kolar2022adaptive,inesta2023performance}.

%Satellite networks offer a distinct advantage for long-distance distribution by utilizing free-space propagation to reduce channel loss and decoherence. However, these systems are constrained by intermittent visibility and limited line-of-sight range, which can be modeled using 2-state Markov chains. 

Satellite-based quantum networks, such as entanglement distribution using hybrid links (ED-HL), combine satellite downlinks, inter-satellite links, and onboard quantum memories to enable global-scale entanglement distribution, with resource allocation typically formulated as an integer programming problem under link availability and resource constraints \cite{yang2026optimizing}.

Quantum network dynamics are well-suited for modeling as Markov decision processes (MDPs) \cite{puterman} and  previous work leveraged MDPs \cite{haldar2024fast,yau2025reinforcement,haldar2025reducing} and dynamic programming to optimize entanglement distribution policies. Recent research has also employed reinforcement learning (RL), such as Q-learning, to discover hardware-aware policies that outperform the standard swap-as-soon-as-possible (SWAP-ASAP) approach \cite{haldar2024fast} and address non-linear application-driven objectives \cite{yau2025reinforcement}.

On the other hand, Age of Information (AoI) based policies have demonstrated significant gains in semantic and goal-oriented classical communication, as AoI is a well-established metric for capturing the freshness of data updates while jointly accounting for delay and throughput \cite{Kaul2012,Yates2021}. Minimizing AoI has been widely modeled as a MDP, and state-dependent dynamic policies have been shown to be highly effective in prior work \cite{ceran2019average,Sun2020AoIBook}.

%Satellite-based quantum networks enable long-distance entanglement distribution by leveraging low-loss free-space optical links, but are fundamentally constrained by intermittent visibility, probabilistic hardware operations, and decoherence in quantum memories. As a result, entangled links are both unreliable to establish and highly time-sensitive once generated. While existing entanglement distribution protocols primarily focus on maximizing success probability or throughput, they do not explicitly control the freshness of stored entanglement under dynamic network conditions. This omission is critical, as stale entanglement may violate fidelity requirements for applications such as quantum key distribution, leading to wasted resources and failed sessions. 

To the best of our knowledge, while link ages have been employed as state variables and fidelity indicators in prior works, the Age of Entanglement (AoE) of the end-to-end link has not been explicitly formulated. Moreover, the relationship and fundamental differences between the classical AoI for data packets and the AoE for entangled quantum links have not been systematically characterized.

The main contributions of this paper are as follows:
\begin{itemize}
    \item We introduce a novel modeling framework that, for the first time, captures intermittent availability of repeater nodes in quantum entanglement distribution.
    \item We propose a state-dependent dynamic control policy for entanglement generation and swapping that significantly outperforms stationary heuristics such as swap-as-soon-as-possible and greedy regeneration.
    \item We formalize the Age of Entanglement (AoE) metric, inspired by the AoI, to jointly capture entanglement generation delay and quality degradation due to decoherence.
\end{itemize}

\section{System Model}

We consider a  discrete-time and slotted linear quantum repeater network for distributing bipartite entanglement between two ground stations, \emph{Alice} and \emph{Bob}, via satellite-based repeater nodes. A shared bipartite entangled state between two nodes is referred to as an entangled link. Satellites act as intermediate repeaters that relay entanglement between neighboring nodes to enable long-distance quantum communication. In this work, we focus on the single-repeater case, where one satellite (node 2) serves as the intermediate node between \emph{Alice} (node 1) and \emph{Bob} (node 3).

%\subsection{Network Topology and Connectivity}

%We consider a linear quantum repeater network consisting of $n$ nodes arranged in a chain. The network connects two ground stations, referred to as \emph{Alice} and \emph{Bob}, through one or more satellite-based repeater nodes. These satellites enable long-distance quantum communication by relaying entanglement between the ground terminals.

%In this work, we focus on two representative network configurations:

%\begin{itemize}
%    \item \textbf{Three-node chain:} A single satellite serves as an intermediate repeater between Alice and Bob. This configuration is applicable when both ground stations are simultaneously within the satellite's coverage footprint.
    
 %   \item \textbf{Four-node chain:} Two satellites act as intermediate repeater nodes and are connected via an inter-satellite link (ISL). This architecture enables end-to-end quantum communication between ground stations that do not share visibility with the same satellite.
%\end{itemize}

%\subsection{Intermittent Satellite Visibility}

%\subsection{Network Topology and Connectivity}

Satellite-to-ground connectivity is inherently intermittent due to orbital motion and elevation-angle constraints. A communication link between a ground station and a satellite is available only when the satellite exceeds a minimum elevation angle $\theta_e$, ensuring sufficient link quality. An elementary link is a heralded bipartite entangled state established via direct physical downlinks between a satellite source and each of its visible ground stations. A virtual end-to-end link is a logical connection generated between physically non-neighboring ground nodes through entanglement swapping operations performed by intermediate satellite repeater.

%We model the visibility of each ground--satellite link as a two-state Markov process with states $V \in \{0,1\}$, corresponding to \emph{invisible} and \emph{visible}, respectively. A link between node $i$ and satellite $j$ can be established at time $t$ only when the visibility indicator satisfies 
%\begin{equation}
%    v_{ij}(t) = 1 .
%\end{equation}
% $v^{12}_t \in \{0,1\}$ and $v^{23}_t \in \{0,1\}$ denote the visibility states associated with links $(1,2)$ and $(2,3)$, respectively. A value of $v^{ij}_t = 1$ indicates that the corresponding elementary link can be requested at time $t$, while $v^{ij}_t = 0$ indicates that the link is not visible.

 We model the visibility of each ground–satellite link as a two-state Markov process with state space \(V \in \{0,1\}\), where \(0\) and \(1\) denote the invisible and visible states, respectively. An elementary entangled link between node \(i\) and satellite \(j\) can be established at time slot \(t\) only if $v_t^{ij} = 1$. Let \(v_t^{12} \in \{0,1\}\) and \(v_t^{23} \in \{0,1\}\) denote the visibility states of links \((1,2)\) and \((2,3)\), respectively. When \(v_t^{ij} = 1\), the corresponding elementary link is available and nodes can request elementary link generation at time \(t\); otherwise, it is unavailable.

%In the four-node configuration, we adopt a hybrid entanglement distribution strategy (ED-HL). Successful end-to-end operation in this case requires simultaneous visibility of both ground--satellite links, as well as an unobstructed line-of-sight inter-satellite link between the two repeater satellites.

%We consider a single repeater chain of N equidistant and identical nodes, which could be a part of a larger quantum network. 

To provide high fidelity and timely end-to-end entanglement link between the two end nodes, we assume the nodes can perform the following operations at each time slot: i) heralded generation of entanglement between each ground node and intermittently available satellite, which succeeds with probability $p_L$, and discarding any link that existed before attempt, ii) entanglement swap attempts, which consumes two adjacent entangled links to generate a longer distance link between ground nodes with probability $p_s$ and  iii) waiting for a better opportunity to generate or swap in the future and letting the links decohere. We also adopt a common assumption and assume discarding any link that existed for some time denoted by $m^*$ and emptying qubit resources to prevent generation of low-quality end-to-end entanglement due to decoherence.

In real quantum memories, the stored entangled state degrades over time due to decoherence, and its fidelity $F(m)$, defined as the closeness between the noisy state and the ideal state, typically decays with storage time $m$; for example, in simple phase-damping models the fidelity of a stored entangled pair decreases approximately exponentially as
\begin{equation}
F(m) \approx e^{-m/T_2},
\end{equation}
where $T_2$ is the memory coherence time that characterizes how rapidly phase information is lost. Consequently, larger link ages correspond to less fidelity, which motivates the use of age-based policies to avoid poor-quality end-to-end entanglement as commonly adopted in previous work~\cite{haldar2024fast}.

Let $m^{12}_t$ and $m^{23}_t$ denote the ages of the elementary links $(1,2)$ and $(2,3)$ at time $t$, respectively. Each elementary-link age takes values in the set $\{-1,1,2,\ldots,m^\star\}$, where $-1$ represents an inactive (non-existent) link and positive integers represent the number of time slots elapsed since the entanglement was generated. Newly generated elementary links are initialized with age equal to~1; consequently, age~0 never occurs. If an elementary link age exceeds the cutoff $m^\star$, the corresponding entanglement is discarded and the age is reset to $-1$.

%We employ relative value iteration (RVI) to find optimal policies and demonstrate that our propose policies outperforms swap-asap-policy and optimal policies with deterministic cut-off times. 

\subsection{Age of Entanglement (AoE)}

We introduce the \emph{Age of Entanglement (AoE)} as a performance metric that captures the freshness of the most recently generated end-to-end entangled pair between nodes~1 and~3. 

Note that, unlike the Age of Information (AoI) used for elementary links, the AoE is not associated with the presence or absence of a physical quantum state, but instead records the time elapsed since the last successful creation of an end-to-end entangled link. The AoE captures both the elapsed time since the most recent successful end-to-end entanglement generation and the degradation of the resulting entangled state due to aging and decoherence, whereas the AoI is a classical metric used to quantify the freshness of information packets.

Let ${\Delta}^e_t$ denote the AoE at the beginning of the time slot~$t$. The AoE evolves stochastically with time depending on the control actions and successful swap probabilities and is updated as follows:
\begin{itemize}
\item If an entanglement swapping operation is successfully performed at time~$t$ with probability $p_s$, producing an end-to-end entangled pair using elementary links with ages $m^{12}_t$ and $m^{23}_t$, then the AoE is \emph{reset} according to
\begin{equation}
{\Delta}^e_{t+1} = m^{12}_t + m^{23}_t.
\end{equation}
This update reflects the standard entanglement swapping formalism in quantum repeaters \cite{haldar2024fast}, where the swapped state inherits the accumulated storage induced decoherence of the  elementary links.
%provided that $m^{12}_t + m^{23}_t \le 2m^\star$.
\item If no successful end-to-end entanglement is generated at time~$t$, then the AoE increases by one unit:
\begin{equation}
{\Delta}^e_{t+1} = {\Delta}^e_t + 1.
\end{equation}
\end{itemize}

The AoE thus represents the age of the most recent end-to-end entanglement, regardless of whether an entangled pair is currently available. Between successive successful swapping events, the AoE grows linearly with time, and it is reset to a value that reflects the freshness of the elementary links used to generate the end-to-end entanglement.

This formulation differs fundamentally from terminal or episodic models, as the system continues to operate after each successful end-to-end entanglement generation. Consequently, AoE provides a natural objective for continuing-control formulations, allowing policies to trade off the frequency of successful entanglement generation (or end-to-end entanglement generation delay) against the freshness of the resulting end-to-end links.

\begin{problem}
    The objective is to determine a stationary policy $\pi$ that minimizes the infinite horizon expected average AoE
\begin{equation}
\eta^\pi \triangleq
\liminf_{T \to \infty}
\frac{1}{T}
\mathbb{E}_{\pi}
\left[
\sum_{t=0}^{T-1}
{\Delta}^e_{t+1}
\right].
\end{equation}
\end{problem}

\section{MDP Formulation and RVI}
\subsection{Markov Decision Process Formulation}

We model the operation of a three-node linear quantum repeater chain as a discrete-time average reward Markov decision process (MDP) with tuple $<\mathcal{S},\mathcal{A},P, r>$ \cite{puterman}. The network consists of nodes $1$--$2$--$3$, where entanglement can be generated on the elementary links $(1,2)$ and $(2,3)$, and an end-to-end virtual link $(1,3)$ can be established via entanglement swapping at the intermediate node~2.

\subsubsection{State Space}
At time slot $t$, the system state is
\begin{align}
s_t \triangleq \big( v^{12}_t,\; v^{23}_t,\; m^{12}_t,\; m^{23}_t,\; {\Delta}^e_t \big) \in \mathcal{S},
\end{align}
where $v^{12}_t \in \{0,1\}$ and $v^{23}_t \in \{0,1\}$ denote the visibility states associated with links $(1,2)$ and $(2,3)$, respectively. A value of $v^{ij}_t = 1$ indicates that the corresponding elementary link can be requested at time $t$, while $v^{ij}_t = 0$ indicates that the link is not visible.

The variable ${\Delta}^e_t$ denotes the AoE of the end-to-end link $(1,3)$, representing the age of the most recently generated end-to-end entangled pair. The AoE takes values in $\{1,2,\ldots,\Delta_{\max}\}$, and ${\Delta}^e_t$ increases by one every time slot until the next successful end-to-end entanglement generation. To ensure a finite state space, the AoE is capped at a large constant $\Delta_{\max}$, i.e., once ${\Delta}^e_t$ reaches $\Delta_{\max}$ it remains at $\Delta_{\max}$ until the next successful reset. 

%To ensure a finite state space, the AoE is capped at a large constant $\Delta_{\max}$, such that
%\begin{equation}
%{\Delta}^e_{t+1} = \min\bigl({\Delta}^e_{t+1},\; \Delta_{\max}\bigr).
%\end{equation}

In contrast to episodic or on-demand entanglement generation policies, the MDP is continuing, that is, there are no terminal or absorbing states, and the process evolves indefinitely.

\subsubsection{Action Space}

At each time slot $t$, the controller selects an action
\begin{equation}
a_t \triangleq \big( a^{12}_t,\; a^{23}_t,\; a^{\mathrm{sw}}_t \big) \in \mathcal{A},
\end{equation}
where $a^{12}_t \in \{0,1\}$ and $a^{23}_t \in \{0,1\}$ indicate whether a request is made to generate entanglement on links $(1,2)$ and $(2,3)$, respectively, and $a^{\mathrm{sw}}_t \in \{0,1\}$ indicates whether an entanglement swapping operation is attempted at node~2.

Not all actions are admissible in every state, i.e., $a_t \in \mathcal{A}(s)$, where  $\mathcal{A}(s)$ represents the admissible policies for each state $s_t=s$, Specifically, entanglement swapping cannot be performed simultaneously with link generation, that is, if $a^{\mathrm{sw}}_t = 1$, then $a^{12}_t = a^{23}_t = 0$.

Moreover, link generation requests are gated by visibility, such that $a^{12}_t = 1$ requires $v^{12}_t = 1$, and $a^{23}_t = 1$ requires $v^{23}_t = 1$. Finally, a swap action is feasible only if both elementary links exist, i.e., $a^{\mathrm{sw}}_t = 1$ requires $m^{12}_t \ge 1$ and $m^{23}_t \ge 1$. 

%The set of admissible actions in state $s$ is denoted by $\mathcal{A}(s)$.

\subsubsection{Visibility Dynamics}

The satellite visibility processes evolve exogenously and independently of the control actions. Specifically, $v^{12}_t$ and $v^{23}_t$ each follow a two-state Markov chain with transition probabilities
\begin{align}
\Pr\!\left( v^{12}_{t+1} = j \mid v^{12}_t = i \right) &= [\mathbf{P}_{12}]_{i,j}, \nonumber \\
\Pr\!\left( v^{23}_{t+1} = j \mid v^{23}_t = i \right) &= [\mathbf{P}_{23}]_{i,j}, \label{eq:visibility}
\end{align}
for $i,j \in \{0,1\}$, where $\mathbf{P}_{12}$ and $\mathbf{P}_{23}$ are given $2 \times 2$ stochastic matrices.

%\subsubsection{Elementary-Link Aging and Memory Cutoff}

%For $\ell \in \{12,23\}$, define the elementary-link aging operator
%\begin{equation}
%g(m) \triangleq
%\begin{cases}
%m+1, & \text{if } m \in \{1,\ldots,m^\star\}\ \text{and}\ m+1 \le m^\star,\\
%-1, & \text{if } m \in \{1,\ldots,m^\star\}\ \text{and}\ m+1 > m^\star,\\
%-1, & \text{if } m=-1.
%\end{cases}
%\end{equation}
%Thus, an active elementary link ages by one each time slot and is discarded if its age exceeds $m^\star$.

\subsubsection{State Transitions}

The one-step transition probability $P(s_{t+1} \mid s_t, a_t)$ factorizes into visibility transition probabilities as in \eqref{eq:visibility} and link-state transitions as follows.
%\begin{equation}
%\Pr(s_{t+1} \mid s_t, a_t)
%=
%\Pr(v^{12}_{t+1} \mid v^{12}_t)\,
%\Pr(v^{23}_{t+1} \mid v^{23}_t)\,
%\Pr(m^{12}_{t+1}, m^{23}_{t+1}, {\Delta}^e_{t+1} \mid s_t, a_t).
%\end{equation}

%\paragraph{(i) Swap action ($a^{\mathrm{sw}}_t=1$).}
When a swap is attempted, $a^{\mathrm{sw}}_t=1$, both parent links are consumed regardless of success:
\begin{equation}
m^{12}_{t+1} = m^{23}_{t+1} = -1.
\end{equation}
With probability $p_{\mathrm{sw}}$, the swap succeeds and the AoE is reset to the sum of link ages $m^{12}_t + m^{23}_t$; otherwise increases by one in every time slot, after applying the swap-reset rule, and is capped at $\Delta_{\max}$. 
\begin{equation}
{\Delta}^e_{t+1} =
\begin{cases}
m^{12}_t + m^{23}_t, & \text{w.p. } p_{\mathrm{sw}} \text{ if } a^{sw}_t=1,\\
\min\bigl( {\Delta}^e_t+1,\ \Delta_{\max} \bigr), & \text{otherwise}.
\end{cases}
\end{equation}
If no swap is attempted, $a^{\mathrm{sw}}_t=0$, elementary link states are determined by
\begin{equation}
m^{12}_{t+1} = g(m^{12}_t), \qquad  m^{23}_{t+1} = g(m^{23}_t),
\end{equation}
where $g(m)$ denotes the elementary-link aging operator for $\ell \in \{12,23\}$.
\begin{equation}
g(m) \triangleq
\begin{cases}
m+1, & \text{if } m \in \{1,\ldots,m^\star\}\ \text{and}\ m+1 \le m^\star,\\
-1, & \text{if } m \in \{1,\ldots,m^\star\}\ \text{and}\ m+1 > m^\star,\\
-1, & \text{if } m=-1.
\end{cases}
\end{equation}
Thus, an active elementary link ages by one each time slot and is discarded if its age exceeds $m^\star$.

Regeneration follows a drop-then-attempt rule. If $a^{12}_t = 1$, the $(1,2)$ link is dropped and replaced by
\begin{equation}
m^{12}_{t+1} =
\begin{cases}
1, & \text{w.p. } p_L,\\
-1, & \text{w.p. } 1-p_L,
\end{cases}
\end{equation}
independently of $m^{12}_t$. If $a^{12}_t=0$, then $m^{12}_{t+1} = g(m^{12}_t)$. Similarly for $m^{23}_{t+1}$ with $a^{23}_t$. When both links are requested simultaneously, the success events are conditionally independent.

%In this case (no swap), we set $R_t = {\Delta}^e_t$.

%\subsubsection{AoE Evolution}

%The AoE increases by one in every time slot, after applying the swap-reset rule, and is capped at $\Delta_{\max}$:
%\begin{equation}
%{\Delta}^e_{t+1} = \min\bigl( R_t + 1,\ \Delta_{\max} \bigr).
%\end{equation}

\subsubsection{Reward}

The immediate reward associated with the transition from $s_t$ to $s_{t+1}$ under action $a_t$ is defined as
\begin{equation}
r(s_t,a_t,s_{t+1}) = -{\Delta}^e_{t+1}.
\end{equation}
Thus, maximizing reward is equivalent to minimizing the long-term average AoE.  

%The objective is to determine a stationary policy $\pi : \mathcal{S} \to \mathcal{A}$ that maximizes the infinite horizon expected average reward, which equivalently minimizes the expected average AoE.

Then, the objective is to determine a stationary policy $\pi : \mathcal{S} \to \mathcal{A}$ that maximizes the infinite horizon expected average reward
\begin{equation}
\eta^\pi \triangleq
\liminf_{T \to \infty}
\frac{1}{T}
\mathbb{E}_{\pi}
\left[
\sum_{t=0}^{T-1}
r(s_t,a_t,s_{t+1})
\right], \label{average_reward}
\end{equation}
which equivalently minimizes the expected average AoE.

Next, Theorem 1 establishes that the optimal stationary policy $\pi^*$ solving the average-reward MDP in \eqref{average_reward} can be obtained via Relative Value Iteration (RVI) \cite{puterman}, as summarized in Algorithm~\ref{algorithm_RVI}.

\begin{theorem}[Optimal stationary policy and RVI optimality]
For the satellite repeater-chain average AoE minimization problem defined in \eqref{average_reward}, there exists an average-cost optimal stationary deterministic policy $\pi^*$. Moreover, the Relative Value Iteration (RVI) algorithm converges (up to an additive constant), and the resulting stationary policy is optimal.
\end{theorem}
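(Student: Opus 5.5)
The approach is to reduce the statement to standard finite-MDP average-cost theory \cite{puterman}. First I verify finiteness. The state space $\mathcal{S}=\{0,1\}^2\times\{-1,1,\ldots,m^\star\}^2\times\{1,\ldots,\Delta_{\max}\}$ is finite, each $\mathcal{A}(s)$ is a finite nonempty set (the idle action $(0,0,0)$ is always admissible), and the reward $-\Delta^e_{t+1}$ is bounded. For finite MDPs, the existence of an average-optimal stationary deterministic policy follows from Blackwell optimality: take discount factors $\beta\to1$, note that there are finitely many deterministic policies, and select one that is $\beta$-discount optimal for all $\beta$ close to $1$. That policy is average optimal, and this gives the first claim with no structural assumption.

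For RVI convergence, I need more: the MDP should be unichain (indeed weakly communicating suffices), and the iteration should be aperiodic. I will first show that the state with $m^{12}=m^{23}=-1$ and $\Delta^e=\Delta_{\max}$ is reachable from every state under every stationary policy.

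The argument runs as follows. Suppose the visibility chains are irreducible, meaning the off-diagonal entries of $\mathbf{P}_{12},\mathbf{P}_{23}$ are positive; I would state this mild assumption explicitly.
\begin{enumerate}
\item Swaps only occur when both links exist.
\item Each failed swap, failed generation, or cutoff after $m^\star$ slots empties the links with positive probability.
\item Whenever a swap is attempted, it fails with probability $1-p_{\mathrm{sw}}>0$.
\item Generation fails with probability $1-p_L>0$.
\end{enumerate}
Under any policy, then, there is positive probability of a run of $\Delta_{\max}$ consecutive slots with no successful swap. Along that run $\Delta^e$ climbs to the cap, and links are emptied within at most $m^\star+1$ further slots.

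Combined with irreducible visibility, this shows that the set $\{(v,-1,-1,\Delta_{\max})\}$ is hit from every state under every policy. Hence every stationary policy has a single recurrent class, and the MDP is unichain.

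For aperiodicity, I would use that $\Delta^e$ is capped: the state $(v,-1,-1,\Delta_{\max})$ under the idle action has a self-loop whenever the visibility chains have positive self-transition probability. In general, I would apply the standard aperiodicity transformation $\tilde P=\tau I+(1-\tau)P$ with $\tau\in(0,1)$, which leaves optimal policies and the gain unchanged.

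With unichain plus aperiodicity, the standard theorem (Puterman, Thm.~8.5.4 and Sec.~8.5.5) gives the following. The iterates $h_{n+1}(s)=\max_a\{r(s,a)+\sum_{s'}P(s'|s,a)h_n(s')\}-\max_a\{\cdot\}(s_0)$ converge to a solution $(g^*,h^*)$ of the average-optimality equation, with $h^*$ determined up to an additive constant. Any policy greedy with respect to $h^*$ is average optimal, and it achieves $\eta^{\pi^*}=g^*$. Because rewards depend on $s_{t+1}$, I replace them by $r(s,a)=\sum_{s'}P(s'|s,a)r(s,a,s')$, which preserves the expected average.

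The main obstacle is establishing the unichain/aperiodicity property uniformly over all stationary policies, since the paper does not assume irreducible visibility. If a visibility chain could have an absorbing state, the recurrence argument must instead rely only on the link and AoE components. In that case I would fall back to the weakly communicating condition: it suffices that the idle policy leads every state to the common set $\{\Delta^e=\Delta_{\max},\, m=-1\}$ in each visibility class. That condition still yields a constant optimal gain and RVI convergence under the aperiodicity transformation.
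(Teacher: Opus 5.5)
Your reduction is the same as the paper's: $\mathcal S$ and $\mathcal A(s)$ are finite, and an empty-memory state with $\Delta^e=\Delta_{\max}$ is reachable from every state under every stationary policy. Hence the MDP is unichain, and Puterman Ch.~8 applies. The mechanism for reaching that state differs.
\begin{itemize}
\item The paper reaches $(0,0,-1,-1,\Delta_{\max})$ by keeping both links invisible for $m^\star+\Delta_{\max}-1$ slots. This uses nothing about $p_L$ or $p_{\mathrm{sw}}$, but it tacitly needs $[\mathbf{P}_{12}]_{00},[\mathbf{P}_{23}]_{00}>0$, which ergodicity alone does not give.
\item You keep the memories empty through failed generations and cutoffs. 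This needs $p_L<1$ but no long invisible runs. Your assumption $p_{\mathrm{sw}}<1$ is not needed, since a successful swap also empties both memories.
\end{itemize}
You also get existence from Blackwell optimality with no structural hypothesis, which the paper does not do. You treat aperiodicity, which the paper skips although plain RVI needs it. Under the paper's tacit $[\mathbf{P}]_{00}>0$, the trap state self-loops under its only admissible (idle) action, so no aperiodicity transformation is required there.

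Two points need repair.

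First, hitting the \emph{set} $\{(v,-1,-1,\Delta_{\max})\}$ gives a single recurrent class only if the joint chain $(v^{12},v^{23})$ has a single closed class. Irreducible marginals do not ensure this. With $\mathbf{P}_{12}=\mathbf{P}_{23}=[0\ 1;\ 1\ 0]$, the sets $\{(0,0),(1,1)\}$ and $\{(0,1),(1,0)\}$ are separate closed classes, so every stationary policy has at least two recurrent classes. The fix is to assume ergodic marginals, as the paper does, so the joint chain is irreducible. Then steer to one fixed state $(v^\ast,-1,-1,\Delta_{\max})$ by letting the visibility move to $v^\ast$ while every generation attempt fails.

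Second, your fallback for absorbing visibility states is wrong as stated. Take several closed visibility classes, e.g.\ $\mathbf{P}_{12}=\mathbf{P}_{23}=I$. The MDP is then not weakly communicating, and the optimal average AoE depends on the initial visibility: it equals $\Delta_{\max}$ from $v=(0,0)$ but is small from $v=(1,1)$. So there is no constant gain. The fallback only works when there is a single closed visibility class, and that is again the unichain case.
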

\begin{proof}
The state of the system $s_t = (v^{12}_t, v^{23}_t, m^{12}_t, m^{23}_t, \Delta^e_t)$ takes values in finite sets; hence the induced average-cost MDP is finite-state and finite-action. Consider the  state 
$ (0,0,-1,-1,\Delta_{\max}).$ Because the link visibilities are ergodic two-state Markov chains, both links remain invisible for  $m^\star + (\Delta_{\max}-1)$ consecutive slots with positive probability. During this period, no new links can be generated, existing links are discarded within $m^\star$ slots, and $\Delta^e_t$ reaches $\Delta_{\max}$. Thus, $(0,0,-1,-1,\Delta_{\max})$ is reachable from every state with positive probability under any stationary policy, implying that the MDP is unichain. 

For finite-state unichain average-cost MDPs, standard results (\cite{puterman}, Ch.~8) guarantee the existence of an optimal stationary  policy and convergence of RVI, completing the proof.
\end{proof}

\begin{algorithm}[t]
\caption{Relative Value Iteration (RVI) for AoE Minimization in a Quantum Repeater}
\label{alg:rvi_aoe}
\begin{algorithmic}[1]
\Require
State space $\mathcal S$ with states 
$s=(v^{12},v^{23},m^{12},m^{23},\Delta^e)$;
admissible action sets $\mathcal A(s)$;
transition kernel $P(\cdot\mid s,a)$ induced by link aging, regeneration, swapping, and visibility dynamics;
AoE-based reward function $r(s,a,s')=-\Delta^e$;
reference state $s_{\mathrm{ref}}$;
tolerance $\varepsilon$.
\Ensure Optimal stationary policy $\pi^\star$

\State Initialize relative value function $h^{(0)}(s)\gets 0$ for all $s\in\mathcal S$
\State Initialize policy $\pi^{(0)}(s)$ arbitrarily
\State Set iteration index $k\gets 0$

\Repeat
    \State $k \gets k+1$
    \ForAll{$s \in \mathcal S$}
        \ForAll{$a \in \mathcal A(s)$}
            \State Compute
            \[
            Q^{(k)}(s,a)
            =
            \sum_{s' \in \mathcal S}
            P(s' \mid s,a)
            \bigl[
            -\Delta^{e\prime}(s') + h^{(k-1)}(s')
            \bigr],
            \]
            where $-\Delta^{e\prime}(s')$ denotes the AoE component of the next state $s'$.
        \EndFor
        \State Update relative value:
        \[
        h^{(k)}(s) \gets \max_{a \in \mathcal A(s)} Q^{(k)}(s,a).
        \]
        \State Update policy:
        \[
        \pi^{(k)}(s) \gets \arg\max_{a \in \mathcal A(s)} Q^{(k)}(s,a).
        \]
    \EndFor
    \State \textbf{Normalization:}
    \[
    h^{(k)}(s) \gets h^{(k)}(s) - h^{(k)}(s_{\mathrm{ref}}),
    \qquad \forall s\in\mathcal S.
    \]
\Until{$\max_{s \in \mathcal S} \lvert h^{(k)}(s) - h^{(k-1)}(s) \rvert < \varepsilon$}

\State \Return $\pi^\star \gets \pi^{(k)}$
\end{algorithmic}
\label{algorithm_RVI}
\end{algorithm}

\section{Simulation Results}

This section evaluates the performance of the proposed RVI policy and compares it against two baseline strategies, namely the \emph{Greedy Gen+Swap ASAP} policy and the \emph{Wait-Until-Ready (WUR)} policy, which are commonly used in the quantum repeater literature
due to their simplicity and intuitiveness. We focus on the average AoE as the primary performance metric and examine its behavior under varying elementary link generation probabilities $p_L$ and swap success probabilities $p_{sw}$ under different visibility scenarios.

\subsection{Benchmark Heuristic Policies}

%To evaluate the performance of the proposed RVI-based control policy, we consider the following benchmark heuristics, which are commonly used in the quantum repeater literature due to their simplicity and intuitive appeal.

%\subsubsection{Greedy Generation and Swap-ASAP Policy}

The \emph{Greedy Generation + Swap-ASAP} policy attempts to maximize immediate progress toward end-to-end entanglement. At each time slot, the controller behaves as follows:
\begin{itemize}
    \item If both elementary links $(1,2)$ and $(2,3)$ are present, an entanglement swap is attempted immediately, that is, $a^{sw}=1$.
    \item Otherwise, the controller requests entanglement generation on all elementary links whenever the satellite is visible, regardless of whether an existing link is already present or not.
\end{itemize}
%Formally, the greedy policy selects the action
%\[
%\pi_{\mathrm{greedy}}(s) =
%\begin{cases}
%(0,0,1), & \text{if } \Delta^{12} \ge 1 \text{ and } \Delta^{23} \ge 1, \\
%(\mathbbm{1}\{v^{12}=1\},\, \mathbbm{1}\{v^{23}=1\},\,0), & \text{otherwise}.
%\end{cases}
%\]

%\subsubsection{Wait-Until-Ready Policy}

The \emph{Wait-Until-Ready} policy adopts a more conservative strategy that avoids unnecessary regeneration when a partial entanglement link already exists. Its behavior is defined as follows:
\begin{itemize}
    \item If both elementary links $(1,2)$ and $(2,3)$ are present, an entanglement swap is attempted immediately, that is, $a^{sw}=1$.
    \item If only one elementary link exists, wait for the other elementary link.
    \item If neither elementary link exists, request entanglement generation on all visible links.
    \item If the elementary link age is higher than a typical small cut-off time, $\tau=4$, discard the existing link.
\end{itemize}

%$P_{12}=P_{23}=\begin{bmatrix}0.3 & 0.7 \\ 0.3 & 0.7\end{bmatrix}$

\subsection{Numerical Results}

Figure~\ref{fig:visible} presents the average AoE versus $p_L$ for mostly visible
links with $P_{12}=P_{23}=[0.3\ 0.7;\ 0.3\ 0.7]$. For all policies, the AoE
decreases as $p_L$ increases due to reduced waiting times for entanglement
generation. The RVI policy consistently achieves the lowest AoE across all
operating regimes, with the largest performance gains observed at low $p_L$,
where unreliable link generation amplifies the effects of entanglement aging.
The Greedy Gen+Swap ASAP policy performs comparably to RVI at high $p_L$ but
incurs higher AoE at low and moderate $p_L$ due to myopic swapping and
regeneration decisions. In contrast, the Wait-Until-Ready (WUR) policy
exhibits the highest AoE, as deferring swapping until both links are available
leads to excessive aging in quantum memory, particularly under intermittent
link availability.
\begin{figure}
    \centering
    \includegraphics[scale=0.65]{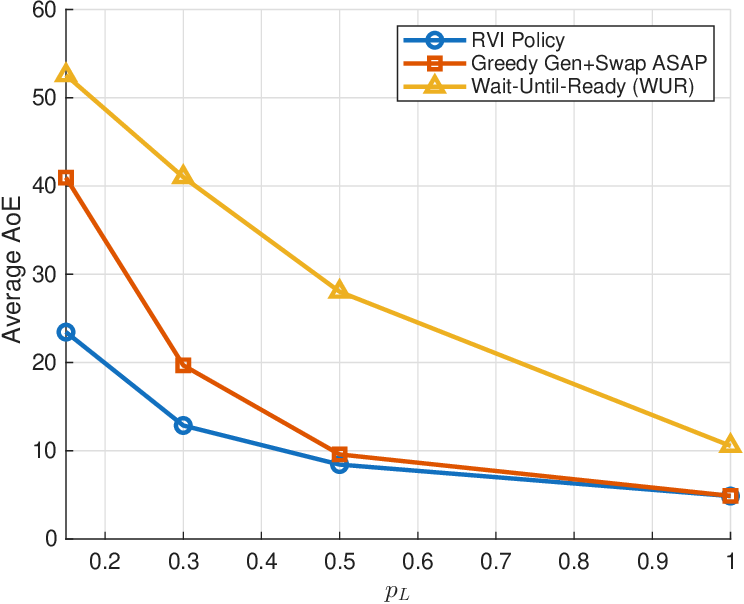}
    \caption{Average age of entanglement (AoE) versus the elementary link generation probability $p_L$ under mostly visible links with visibility transition matrices $P_{12}=P_{23}=[0.3\ 0.7;\ 0.3\ 0.7]$, when $p_{sw} = 0.8$.}
    \label{fig:visible}
\end{figure}

Figure~\ref{fig:moderate_visible} illustrates the same performance trends
under moderately visible links with $P_{12}=P_{23}=[0.6\ 0.4;\ 0.4\ 0.6]$.
While the average AoE again decreases with increasing $p_L$ for all policies,
the overall AoE levels are higher compared to the mostly visible case in
Fig.~\ref{fig:visible}, reflecting the reduced satellite availability.
Nevertheless, the RVI policy continues to outperform both baseline strategies
across all regimes, highlighting its robustness to degraded visibility
conditions.
\begin{figure}
    \centering
    \includegraphics[scale=0.65]{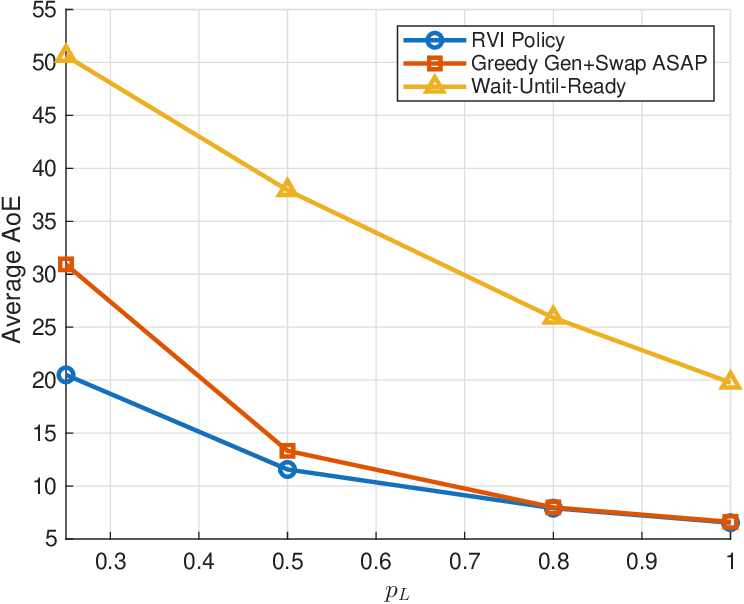}
    \caption{Average age of entanglement (AoE) versus elementary link generation probability $p_L$ under moderately visible links with visibility transition matrices $P_{12}=P_{23}=[0.6\ 0.4;\ 0.4\ 0.6]$, when $p_{sw} = 0.8$.}
    \label{fig:moderate_visible}
\end{figure}

Figure \ref{fig:wrtpsw} shows the average AoE as a function of the swap success probability
$p_{sw}$ for a fixed link generation probability $p_L = 0.3$ under asymmetric
visibility conditions. The visibility processes are given by
$P_{12}=[0.1\ 0.9;\ 0.1\ 0.9]$ and $P_{23}=[0.55\ 0.45;\ 0.55\ 0.45]$,
indicating that the first link is frequently visible while the second is more
intermittent. As $p_{sw}$ increases, the AoE decreases for all policies due to
reduced waiting in quantum memory. The RVI policy consistently achieves the
lowest AoE, while the Greedy policy degrades at low and moderate $p_{sw}$, and
the Wait-Until-Ready (WUR) policy incurs the highest AoE due to longer
storage under asymmetric link availability. Asymmetric link availability also degrades the optimal performance achieved by RVI algorithm. 

\begin{figure}
    \centering
    \includegraphics[scale=0.65]{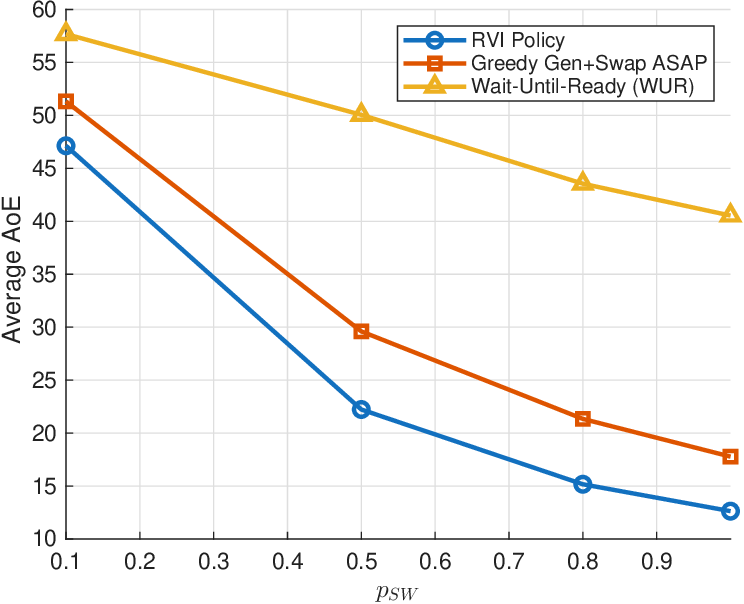}
    \caption{Average age of entanglement (AoE) versus swapping success probability $p_{sw}$ under asymmetric
visibility with visibility transition matrices $P_{12}=[0.1\ 0.9;\ 0.1\ 0.9]$ and $P_{23}=[0.55\ 0.45;\ 0.55\ 0.45]$, when $p_L = 0.3$.}
    \label{fig:wrtpsw}
\end{figure}

The convergence behavior of the proposed RVI algorithm is shown in Fig.~\ref{fig:covergence}. As $p_L$ and $p_{sw}$ decrease, convergence slows slightly but remains within a reasonable number of iterations.

\begin{figure}
    \centering
    \includegraphics[scale=0.65]{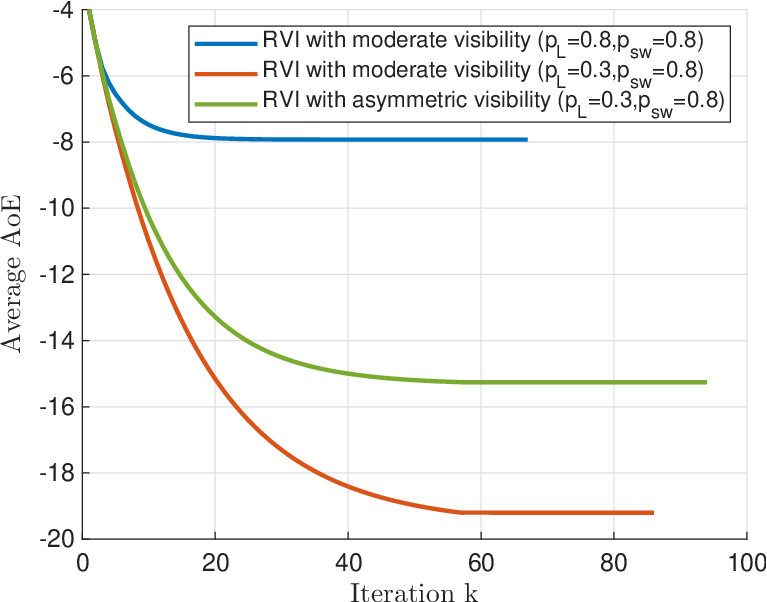}
    \caption{Convergence of the Relative Value Iteration (RVI) algorithm under different scenarios.}
    \label{fig:covergence}
\end{figure}

\section{Conclusion}

This paper has investigated the control of continuous entanglement distribution in
satellite-assisted quantum networks under probabilistic operations,
intermittent visibility, and quantum memory decoherence. By modeling
entanglement freshness using the novel AoE metric and
formulating the problem as an average-cost Markov decision process, we have demonstrated
that state-aware control policies can significantly outperform greedy and
conservative baseline strategies across a wide range of operating conditions.
The proposed RVI-based policy consistently achieves lower AoE by accounting
for future costs and heterogeneous link dynamics, particularly in regimes
characterized by low link reliability or asymmetric visibility. By keeping
entanglement fresh and optimally allocating actions, this work advances the
design of efficient and reliable quantum networking systems. This work also provides a baseline for future studies on AoE optimization in longer repeater chains and possibly the development of learning-based control algorithms to enable more scalable solutions in large-scale quantum
networks.

% conference papers do not normally have an appendix

% use section* for acknowledgment
%\section*{Acknowledgment}

%The authors would like to thank...

% trigger a \newpage just before the given reference
% number - used to balance the columns on the last page
% adjust value as needed - may need to be readjusted if
% the document is modified later
%\IEEEtriggeratref{8}
% The "triggered" command can be changed if desired:
%\IEEEtriggercmd{\enlargethispage{-5in}}

% references section

% can use a bibliography generated by BibTeX as a .bbl file
% BibTeX documentation can be easily obtained at:
% http://mirror.ctan.org/biblio/bibtex/contrib/doc/
% The IEEEtran BibTeX style support page is at:
% http://www.michaelshell.org/tex/ieeetran/bibtex/
\bibliographystyle{IEEEtran}
% argument is your BibTeX string definitions and bibliography database(s)
%\bibliography{IEEEabrv,../bib/paper}
%
% <OR> manually copy in the resultant .bbl file
% set second argument of \begin to the number of references
% (used to reserve space for the reference number labels box)
%\begin{thebibliography}{1}

%\bibitem{IEEEhowto:kopka}
%H.~Kopka and P.~W. Daly, \emph{A Guide to \LaTeX}, 3rd~ed.\hskip 1em plus
%  0.5em minus 0.4em\relax Harlow, England: Addison-Wesley, 1999.

%\end{thebibliography}

\bibliography{Qunap.bib}

% that's all folks
\end{document}